\newtheorem{theorem}{Theorem}
\newtheorem{lemma}[theorem]{Lemma}
\newcommand\R{\mathbb{R}}
\newcommand\E{\mathbb{E}}
\newcommand\Prb{\mathbb{P}}
\newcommand\cP{{\mathcal P}}
\newcommand\Th[1]{Theorem~\ref{t:#1}}
\begin{document}

\title{Unique coverage in Boolean models}
\author{M.~Haenggi \and A.~Sarkar}





\maketitle

\begin{abstract}

Consider a wireless cellular network consisting of small, densely scattered base stations.
A user $u$ is {\em uniquely covered} by a base station
$b$ if $u$ is the only user within distance $r$ of $b$. This makes it possible to assign the user $u$ to the
base station $b$ without interference from any other user $u'$. 
We investigate
the maximum possible proportion of users who are uniquely covered.
We solve this problem completely in one
dimension and provide bounds, approximations and simulation results for the two-dimensional case.

\end{abstract}


\section{Introduction}

Consider a wireless cellular network consisting of small, densely scattered base stations, each with limited 
processing capability. (In \cite{net:Caire16arxiv} and the related engineering literature, the small base stations 
are called {\em remote radio heads}.) In such a network, a user $u$ is {\em uniquely covered} by a base station 
$b$ if $u$ is the only user within distance $r$ of $b$. This makes it possible to assign the user $u$ to the
base station $b$ without interference from any other user $u'$. Ideally, we would like to assign a base station
to every user. However, the underlying stochastic geometry will prevent this. In this paper, we investigate 
the maximum possible proportion of users who can be uniquely assigned base stations, as the communication range 
$r$ varies, for each pair of densities of both users and base stations. 

Although we have just referred to {\em two} densities, only their ratio is significant; in other words,
the model can be scaled so that we expect one user per unit area. Accordingly, we set the intensity of users
to be one. Thus the only parameters we need to consider are the density $\mu$ of base stations, and the range $r$. 
Moreover, we note that our analysis also solves the problem, considered in~\cite{net:Caire16arxiv}, of uniquely assigning 
users to base stations (so as to avoid {\em pilot contamination}); to see this, simply interchange the roles of 
users and base stations.

All logarithms in this paper are to base $e$.

\section{Model}

Our model is as follows. Fix $r>0$, and let $\cP$ and $\cP'$ be independent Poisson processes, of
intensities $\mu$ and 1 respectively, in $\R^d$. The main case of interest is $d=2$. The points of
$\cP$ represent the base stations, and the points of $\cP'$ represent the users. A user $u\in\cP'$ is
uniquely covered by a base station $b\in\cP$ if firstly $||b-u||<r$, and secondly $||b-u'||\ge r$
for every other user $u'\in\cP'$. We wish to calculate (or estimate) the proportion $p^d(\mu,r)$ of
users who are uniquely covered by base stations; note that this proportion is also the probability that
an arbitrary user is uniquely covered by a base station.

\section{A general result}

In order to state our main result, we need some notation. First, for simplicity, we will initially consider
just the case $d=2$. Next, let $D=D(O,r)$ be the fixed open disc of radius
$r$, centered at the origin $O$. Write $f_r(t)$ for the probability density function of the fraction $t$ of $D$ which
is left uncovered when discs of radius $r$, whose centers are a unit intensity Poisson process, are placed in the entire
plane $\R^2$. There is in general no closed-form expression for $f_r(t)$; however, the function is easy to
estimate by simulation.

\begin{theorem}
In two dimensions, we have
\begin{equation}\label{basic}
p^2(\mu,r)=\int_0^1(1-e^{-\mu\pi r^2t})f_r(t)\,dt.
\end{equation}
\end{theorem}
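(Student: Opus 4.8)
The plan is to condition on a typical user sitting at the origin and then reduce the unique-coverage event to a Poisson void probability for the base stations. Since $p^2(\mu,r)$ is by definition the probability that an arbitrary (typical) user is uniquely covered, I would invoke Slivnyak's theorem: the Palm distribution of $\cP'$ given a user at the origin $O$ is that of $\cP'$ with an extra point placed at $O$, while the remaining users still form a unit intensity Poisson process and $\cP$ is unaffected and remains independent of $\cP'$. It therefore suffices to compute the probability that the user at $O$ is uniquely covered.

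First I would translate the unique-coverage condition into a purely geometric one. The user at $O$ is uniquely covered precisely when some base station $b\in\cP$ satisfies $\|b\|<r$ and $\|b-u'\|\ge r$ for every other user $u'\in\cP'$. The first condition says $b\in D=D(O,r)$; the second says that $b$ lies in none of the discs $D(u',r)$, $u'\in\cP'$. Hence $O$ is uniquely covered iff $\cP$ places at least one point in the set $U=D\setminus\bigcup_{u'\in\cP'}D(u',r)$, namely the part of $D$ left uncovered by the radius-$r$ discs centred at the other users. Boundary coincidences such as $\|b\|=r$ occur with probability zero and may be ignored.

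Next I would condition on the user process $\cP'$, which determines the random open set $U$ and in particular its area $|U|$. Given $\cP'$, the base stations form a Poisson process of intensity $\mu$ independent of $\cP'$, so by the Poisson void probability the conditional probability that no base station lands in $U$ is $e^{-\mu|U|}$, whence the conditional probability of unique coverage is $1-e^{-\mu|U|}$. Writing $t=|U|/(\pi r^2)$ for the uncovered fraction of $D$, we have $|U|=\pi r^2 t$, and by stationarity of $\cP'$ the law of $t$ does not depend on where $D$ is centred; by construction $t$ is exactly the quantity whose density is $f_r$, since the other users are a unit intensity Poisson process and the discs of radius $r$ are placed throughout the plane. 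Taking the expectation over $\cP'$ then yields
\[
p^2(\mu,r)=\E\bigl[1-e^{-\mu|U|}\bigr]=\int_0^1\bigl(1-e^{-\mu\pi r^2 t}\bigr)f_r(t)\,dt,
\]
as claimed.

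The main delicate point is the invocation of Palm theory: one must verify that under the Palm distribution the added user at $O$ does not alter the law of the remaining users (Slivnyak) and that $\cP$ stays independent of $\cP'$, so that the void-probability computation is valid after conditioning on $\cP'$. Everything else is bookkeeping: identifying the geometric event, applying the void probability, and matching the distribution of the uncovered fraction with the definition of $f_r$.
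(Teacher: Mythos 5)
Your proof is correct and takes essentially the same route as the paper's: put down the users first, identify the uncovered region of the disc $D(u,r)$ around the typical user, apply the Poisson void probability for the base stations to get the conditional coverage probability $1-e^{-\mu\pi r^2 t}$, and integrate against $f_r$. The only difference is presentational: you make explicit, via Slivnyak's theorem, the Palm-theoretic justification for treating the typical user as an added point at the origin, a step the paper carries out informally by speaking of ``a fixed user $u$.''
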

\begin{proof}
The main idea of the proof is to put down the users first, and then, for a fixed user $u$, calculate
the probability that a base station $b$ ``lands" in such a way that $u$ is uniquely covered by $b$. To this end,
place a disc $D(u,r)$ of radius $r$ around each user $u$, and
then a fixed user $u$ is uniquely covered if there is a base station $b\in D(u,r)$ such that $b\not\in D(u',r)$
for all other users $u'\not= u$. Let $X$ be the random variable representing the uncovered area fraction
of $D(u,r)$ when all the other discs $D(u',r)$ are placed randomly in the plane. Then
\[
\Prb(u{\rm \ is\ covered}\mid X=t)=1-e^{-\mu\pi r^2t},
\]
since for $u$ to be covered we require that some base station $b$ lands in the uncovered region in $D(u,r)$,
which has area $\pi r^2t$. (Here, by ``uncovered", we mean ``uncovered by the union of all the other discs
$\bigcup_{u'\not= u}D(u',r)$".) Consequently,
\[
p^2(\mu,r)=\int_0^1 \Prb(u{\rm \ is\ covered}\mid X=t)f_r(t)\,dt=\int_0^1(1-e^{-\mu\pi r^2t})f_r(t)\,dt,
\]
as required.
\end{proof}

The same argument yields the following result for the general case. For $d\ge 1$, write $D^d(O,r)$ for the
$d$-dimensional ball of radius $r$ centered at the origin $O$, and $f^d_r(t)$ for the probability density function
of the fraction $t$ of $D^d(O,r)$ which is left uncovered when balls of radius $r$, whose centers are a unit intensity
Poisson process, are placed in $\R^d$. Finally, let $V_d$ be the volume of the unit-radius ball in $d$ dimensions.
\begin{theorem}
In $d$ dimensions, we have
\[
p^d(\mu,r)=\int_0^1(1-e^{-\mu V_d r^dt})f^d_r(t)\,dt.
\]
\end{theorem}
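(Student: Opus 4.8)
The plan is to mimic exactly the argument given for the two-dimensional Theorem~\ref{basic}, since the reasoning never used anything specific to $d=2$ beyond the formula for the volume of a ball. The final statement is literally the $d$-dimensional generalization, and the preceding sentence ("The same argument yields...") signals that no new idea is required. So I would structure the proof as a clean transcription of the $d=2$ proof with the volume $\pi r^2$ replaced by $V_d r^d$.

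First I would condition on the users: place the Poisson process $\cP'$ of intensity $1$ in $\R^d$, fix a typical user $u$, and surround each user $u'$ with a ball $D^d(u',r)$. The user $u$ is uniquely covered precisely when some base station $b\in\cP$ lands in the part of $D^d(u,r)$ that is not covered by $\bigcup_{u'\ne u}D^d(u',r)$. Next I would introduce $X$ as the uncovered \emph{fraction} of $D^d(u,r)$; by translation invariance of the Poisson process and the definition of $f^d_r$, the density of $X$ is exactly $f^d_r(t)$. Conditioned on $X=t$, the uncovered region has volume $V_d r^d t$ (the ball $D^d(u,r)$ has volume $V_d r^d$). Since base stations form an independent Poisson process of intensity $\mu$, the number landing in a region of volume $V_d r^d t$ is Poisson with mean $\mu V_d r^d t$, so the probability that at least one lands there is $1-e^{-\mu V_d r^d t}$.

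Finally I would integrate against the density of $X$ to obtain
\[
p^d(\mu,r)=\int_0^1 \Prb(u\text{ is covered}\mid X=t)\,f^d_r(t)\,dt=\int_0^1(1-e^{-\mu V_d r^d t})f^d_r(t)\,dt.
\]
The one point deserving a word of care is the independence used in the conditioning step: the uncovered region is a function of $\cP'$ alone, while the base stations come from the independent process $\cP$, so conditioning on the shape of that region does not alter the Poisson law governing $\cP$, and the conditional placement probability is genuinely $1-e^{-\mu V_d r^d t}$. I do not expect any real obstacle here; the only thing to be slightly careful about is that $X$ is a legitimate random variable with a density on $[0,1]$ (rather than having atoms, e.g.\ at $t=1$ when no other user's ball meets $D^d(u,r)$), but since $f^d_r$ is introduced as the probability density function of this fraction, this is exactly what the hypothesis grants us, and the computation goes through verbatim from the planar case.
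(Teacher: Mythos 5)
Your proof is correct and is precisely the argument the paper intends: the paper gives no separate proof of the $d$-dimensional statement, simply remarking that ``the same argument yields'' it, and your transcription of the $d=2$ proof with $\pi r^2$ replaced by $V_d r^d$ is exactly that argument. Your extra care about the independence of $\cP$ and $\cP'$ and about the possible atoms of $X$ (which do occur, e.g.\ at $t=1$) is a sound refinement of a point the paper itself glosses over.
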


\section{The case $d=1$}

Unfortunately, $f^d_r(t)$ is only known exactly when $d=1$. The result is summarized in the following lemma,
in which for simplicity we consider the closely related function $g_r(s):=f^1_r(s/2r)$, which represents the total 
uncovered length in $(-r,r)$.

\begin{lemma}
In one dimension, we have
\[
g_r(s):=f^1_r(s/2r)=
\begin{cases}
1-e^{-2r}(1+2r)&\text{\rm point mass at $s=0$}\\
(2+2r-s)e^{-(2r+s)}&0<s<2r\\
e^{-4r}&\text{\rm point mass at $s=2r$}.\\
\end{cases}
\]
\end{lemma}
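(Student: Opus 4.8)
The plan is to describe the uncovered part of $(-r,r)$ using only the two Poisson points closest to the origin, one on each side. Let $a$ be the distance from $O$ to the nearest point of the (unit-intensity) Poisson process on its left, and $b$ the distance to the nearest point on its right; since the centres form a unit-intensity Poisson process on $\R$, the variables $a$ and $b$ are independent, each with the Exponential$(1)$ distribution. The point straddling $O$ on the left sits at $-a$ and casts the interval $(-a-r,-a+r)$, while the point on the right sits at $b$ and casts $(b-r,b+r)$. The only maximal uncovered interval (``gap'') of the whole-line coverage that can possibly meet $(-r,r)$ is the one between these two points, namely $(r-a,\,b-r)$, which is non-empty exactly when $a+b>2r$. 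Intersecting with $(-r,r)$ and clamping to the endpoints, the total uncovered length is
\[
s=\min(b-r,\,r)-\max(r-a,\,-r),
\]
interpreted as $0$ when this is non-positive. The first task is to justify this reduction, in particular that when $a+b\le 2r$ the two straddling intervals already cover all of $(-r,r)$, so that $s=0$ precisely on $\{a+b\le 2r\}$.

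The two point masses then fall out at once. The interval is entirely uncovered, $s=2r$, exactly when both nearest points are at distance at least $2r$, i.e.\ on $\{a\ge 2r\}\cap\{b\ge 2r\}$, of probability $e^{-2r}\cdot e^{-2r}=e^{-4r}$. The interval is entirely covered, $s=0$, exactly on $\{a+b\le 2r\}$; since $a+b$ is a sum of two independent Exponential$(1)$ variables, $\Prb(a+b>2r)=(1+2r)e^{-2r}$, giving the claimed mass $1-e^{-2r}(1+2r)$ at $s=0$.

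For the absolutely continuous part $0<s<2r$ I would split the event $\{a+b>2r\}$ according to whether each of $a,b$ exceeds $2r$. On $\{a<2r,\ b<2r\}$ one gets $s=a+b-2r$; on $\{a\ge 2r,\ b<2r\}$ one gets $s=b$; and symmetrically $s=a$ on $\{a<2r,\ b\ge 2r\}$. Integrating the joint density $e^{-(a+b)}$ along each level set $\{s=\mathrm{const}\}$ contributes $(2r-s)e^{-(2r+s)}$ from the first region (here $a+b=2r+s$ is constant and the free parameter ranges over an interval of length $2r-s$) and $e^{-(2r+s)}$ from each of the other two, so the density on $(0,2r)$ is the sum $(2+2r-s)e^{-(2r+s)}$, as required. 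A final sanity check is that the two masses together with $\int_0^{2r}(2+2r-s)e^{-(2r+s)}\,ds$ sum to $1$.

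The conceptual heart of the argument—and the step most in need of care—is the reduction in the first paragraph: showing that the uncovered length in $(-r,r)$ depends on the entire configuration only through $a$ and $b$. The key point is that at most one gap of the whole-line coverage can meet $(-r,r)$ (two such gaps would force a straddling consecutive pair on each side of $O$, which is impossible), and that this gap is exactly the one between the two points nearest $O$. Once this is in hand everything reduces to an elementary computation with two independent exponentials, the only mild subtlety being the truncations $\min(\cdot,r)$ and $\max(\cdot,-r)$ that appear when a nearest point lies beyond distance $2r$.
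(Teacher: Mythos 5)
Your proof is correct and takes essentially the same approach as the paper's: both reduce the uncovered length of $(-r,r)$ to the two nearest Poisson points on either side of the origin (your $a,b$ are the paper's $x,y$), arrive at the same five-case description of the uncovered length, and compute the point masses and the density $(2+2r-s)e^{-(2r+s)}$ from two independent Exponential$(1)$ variables in the same way. Your clamped $\min/\max$ formula and the explicit gap-uniqueness argument are simply a more detailed packaging of what the paper calls a ``lengthy but routine case analysis.''
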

\begin{proof}
Consider the interval $I_r:=D^1(O,r)=(-r,r)$. The uncovered length $U$ of $I_r$ is determined solely by the
location of the closest user $u_l$ to the left of the origin $O$, and the closest user $u_r$ to the right of $O$. Suppose
indeed that $u_l$ is located at $-x$ and that $u_r$ is located at $y$. Then it is easy to see that if $x+y\le 2r$, we have $U=0$;
in other words, all of $I_r$ is covered by $D(u_l,r)\cup D(u_r,r)$ when $x+y\le 2r$. At the other extreme, if both $x\ge 2r$
and $y\ge 2r$, then $U=2r$; in this case the entire interval $I_r$ is left uncovered by $D(u_l,r)\cup D(u_r,r)$, and so by
the union $\bigcup_{u}D(u,r)$. In general, a lengthy but routine case analysis gives
\[
U=
\begin{cases}
0&x+y\le 2r\\
x+y-2r&x+y\ge 2r,x\le 2r,y\le 2r\\
x&0\le x\le 2r,y\ge 2r\\
y&0\le y\le 2r,x\ge 2r\\
2r&x\ge 2r,y\ge 2r.\\
\end{cases}
\]
This immediately yields the point masses of $g_r(s)$, since $x+y$ has a gamma distribution of mean 2, and $x$ and $y$ are each
exponentially distributed with mean 1. For $0<s<2r$ we find, using the above expression, that
\[
g_r(s)=2e^{-2r}\cdot e^{-s}+\int_{s}^{2r}e^{-x}e^{-(2r+s-x)}\,dx=(2+2r-s)e^{-(2r+s)},
\]
completing the proof of the lemma.
\end{proof}
\noindent Using this lemma, we obtain the following expression for $p^1(\mu,r)$.
\begin{theorem}
In one dimension, we have
\[
p^1(\mu,r)=\frac{\mu e^{-2r}(\mu+2r+2r\mu)-\mu^2e^{-2r(2+\mu)}}{(1+\mu)^2}.
\]
\end{theorem}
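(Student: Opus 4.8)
The plan is to apply the general coverage formula (the first theorem, specialized to $d=1$) and then substitute the explicit law of the uncovered length supplied by the lemma. It is cleanest to run the conditioning argument from the proof of the first theorem in terms of the uncovered \emph{length} $s$ of $(-r,r)$ rather than the uncovered \emph{fraction}: conditional on the uncovered set having total length $s$, a fixed user is covered precisely when at least one base station of the intensity-$\mu$ Poisson process lands in that set, an event of probability $1-e^{-\mu s}$. Writing $U$ for the random uncovered length, whose distribution is given by the lemma, we obtain
\[
p^1(\mu,r)=\E\bigl[1-e^{-\mu U}\bigr].
\]

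First I would split this expectation according to the three pieces of the law of $U$ described in the lemma. The point mass at $s=0$ contributes nothing, because $1-e^{-\mu\cdot 0}=0$. The point mass $e^{-4r}$ at $s=2r$ contributes $\bigl(1-e^{-2\mu r}\bigr)e^{-4r}$. The continuous part, with density $g_r(s)=(2+2r-s)e^{-(2r+s)}$ on $(0,2r)$, contributes
\[
\int_0^{2r}\bigl(1-e^{-\mu s}\bigr)(2+2r-s)e^{-(2r+s)}\,ds.
\]
Thus the whole problem reduces to evaluating this single integral and adding the boundary term from $s=2r$.

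The remaining step is a direct computation. I would write the integral as the difference of
\[
\int_0^{2r}(2+2r-s)e^{-(2r+s)}\,ds
\qquad\text{and}\qquad
\int_0^{2r}(2+2r-s)e^{-(2r+s)-\mu s}\,ds,
\]
evaluating each by a single integration by parts (equivalently, by differentiating $\int_0^{2r}e^{-\beta s}\,ds$ in the parameter $\beta$ and setting $\beta=1$ or $\beta=1+\mu$). The first integral involves only the exponent $-s$ and is elementary; the second carries the exponent $-(1+\mu)s$, and it is precisely the resulting $1/(1+\mu)^2$ factors that produce the denominator $(1+\mu)^2$ in the stated formula, as well as the $e^{-2r(2+\mu)}$ term once the leading $e^{-2r}$ is restored.

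The only real difficulty is bookkeeping. One must keep track of several contributions proportional to $e^{-2r}$, $e^{-4r}$ and $e^{-2r(2+\mu)}$ and combine them carefully; in particular the $e^{-4r}$ produced at the upper limit of the first integral cancels exactly against the $s=2r$ boundary term, while the $e^{-2r}$ and $e^{-2r(2+\mu)}$ coefficients collapse to $\mu(\mu+2r+2r\mu)$ and $-\mu^2$ respectively after clearing the common denominator $(1+\mu)^2$. As a sanity check on the algebra, the final expression should vanish as $\mu\to 0$ (no base stations, hence no coverage) and as $r\to 0$ or $r\to\infty$, all of which are easily confirmed from the closed form.
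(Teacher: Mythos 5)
Your proposal is correct and follows essentially the same route as the paper: specialize the general coverage formula to $d=1$, use the lemma's law of the uncovered length $U$ (noting the point mass at $s=0$ contributes nothing and the one at $s=2r$ contributes $e^{-4r}(1-e^{-2\mu r})$), and evaluate $\int_0^{2r}(2+2r-s)e^{-(2r+s)}(1-e^{-\mu s})\,ds$ by elementary means. Your bookkeeping claims also check out --- the $e^{-4r}$ terms do cancel, and the remaining coefficients reduce to $\mu(\mu+2r+2r\mu)$ and $-\mu^2$ over $(1+\mu)^2$, reproducing the stated formula.
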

\begin{proof}
From Theorem 2 and Lemma 3 we have
\begin{align*}
p^1(\mu,r)&=\int_0^{2r}g_r(s)(1-e^{-\mu s})\,ds\\
&=e^{-4r}(1-e^{-2r\mu})+\int_0^{2r}(2+2r-s)e^{-(2r+s)}(1-e^{-\mu s})\,ds\\
&=\frac{\mu e^{-2r}(\mu+2r+2r\mu)-\mu^2e^{-2r(2+\mu)}}{(1+\mu)^2}.
\end{align*}
\end{proof}
\noindent 
$p^1(\mu,r)$ is illustrated in Fig.~\ref{fig:one_dim}.
The value of $r$ that maximizes $p^1$ is
\begin{equation}
   r_{\rm opt}(\mu)=\frac{1+\mathcal{W}(\mu(\mu+2)e^{-1})}{2\mu+2} ,
 \label{r_opt}
\end{equation}
where $\mathcal{W}$ is the (principal branch of the) Lambert W-function. It is easily seen
that $r_{\rm opt}(0)=1/2$ and that $r_{\rm opt}$ decreases with $\mu$.

\begin{figure}
\centerline{\includegraphics[width=8cm]{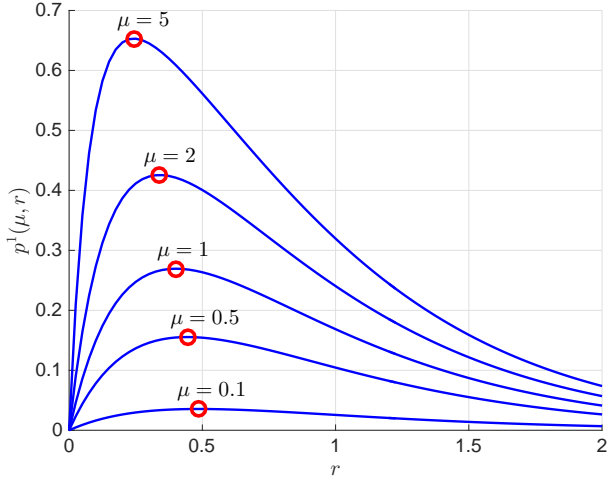}}
\caption{Fraction of users that are uniquely covered in one dimension. The circle indicates the maximum
$p^1(\mu,r_{\rm opt})$, where $r_{\rm opt}$ is given in \eqref{r_opt}.}
\label{fig:one_dim}
\end{figure}

\section{The case $d=2$}

In two dimensions, although the function $f^2_r(t)$ is currently unknown, it can be approximated by simulation, and then 
the integral \eqref{basic} can be computed numerically. While this still involves a simulation, it is more efficient than 
simulating the original model itself, since $f^2_r$ can be used to determine the unique coverage probability for many 
different densities $\mu$ (and the numerical evaluation of the expectation over $X$ is very efficient).
The resulting unique coverage probability $p^2(\mu,r)$ is illustrated in Fig.~\ref{fig:two_dim}.
The maxima of $p^2(\mu,r)$ over $r$, achieved at $p^2(\mu,r_{\rm opt}(\mu))$, are highlighted using circles.
Interestingly, $r_{\rm opt}(\mu)\approx 4/9$ for a wide range of values of $\mu$; the average of $r_{\rm opt}(\mu)$ 
over $\mu\in [0,10]$ appears to be about $0.45$.

The simulated $f_r(t)$ is shown in Fig.~\ref{fig:pdf_fr} for $r=3/9,4/9,5/9$. Remarkably, the density
$f_{4/9}(t)$ is very close to uniform (except for the point masses at $0$ and $1$). If the distribution were in fact uniform,
writing $v=\E(X)=e^{-\pi r^2}=e^{-16\pi/81}\approx 0.538$, we would have
\begin{equation}
\hat f_{4/9}(t)=\begin{cases}
1+v^4-2v\approx 0.008&\text{\rm point mass at $t=0$}\\
2(v-v^4)\approx 0.908 &0<t<1\\
v^4 \approx 0.084 &\text{\rm point mass at $t=1$}.\\
\end{cases}
\label{f49}
\end{equation}
Here, $v^4=e^{-4\pi r^2}$ is the probability that no other user is within distance $2r$, in which case the entire
disc $D(O,r)$ is available for base stations to cover $O$. The constant $2(v-v^4)$ is also shown in Fig.~\ref{fig:pdf_fr} 
(dashed line). Substituting \eqref{f49} in \eqref{basic} yields the following approximation to $p^2(\mu,4/9)$ and to 
$p^2(\mu,r_{\rm opt})$:
\begin{equation}
p^2(\mu,r_{\rm opt}(\mu))\approx 2(v-v^4)\left(1-\frac{1-e^{-c}}{c}\right)+v^4 (1-e^{-c}), \quad c=\mu \pi (4/9)^2=-\mu\log v.
\label{p2_approx}
\end{equation}
This approximation is shown in Fig.~\ref{fig:p2_opt}, together with the exact numerical result. For $\mu\in [3,7]$, 
the curves are indistinguishable.

For small $\mu$, $p^2(\mu,r)\approx e^{-\pi r^2}(1-e^{-\mu\pi r^2})$ (see \Th{p2_bound} immediately below), and so 
\[
r_{\rm opt}(\mu)\approx \sqrt{\frac{\log(1+\mu)}{\mu\pi}}\to \pi^{-1/2}
\] 
as $\mu\to 0$.

\begin{figure}
\centerline{\includegraphics[width=8cm]{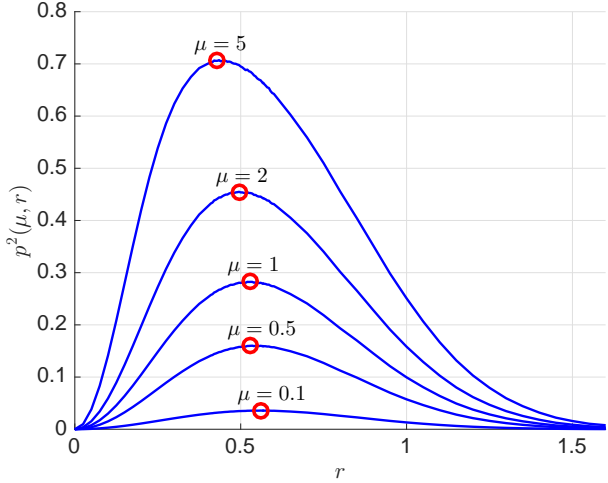}}
\caption{Fraction of users that are uniquely covered in two dimensions. The circles indicate the maxima
$p^2(\mu,r_{\rm opt})$.}
\label{fig:two_dim}
\end{figure}

\begin{figure}
\centerline{\includegraphics[width=8cm]{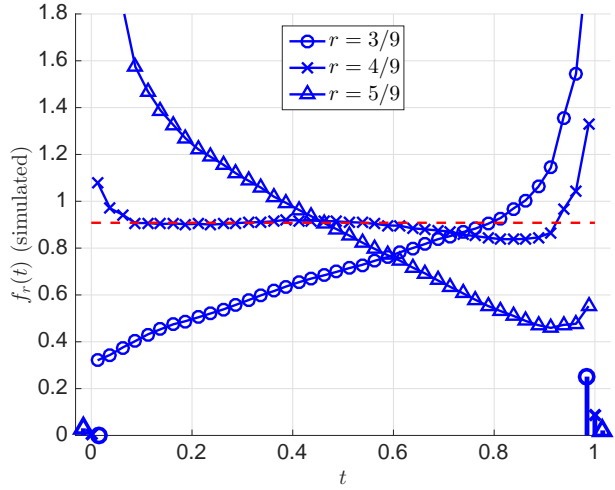}}
\caption{Simulated densities $f_r(t)$ for $r=3/9,4/9,5/9$ in two dimensions. The vertical lines near $0$ and $1$
indicate the point masses.
The dashed line is the uniform approximation \eqref{f49} for $r=4/9$.}
\label{fig:pdf_fr}
\end{figure}

\begin{figure}
\centerline{\includegraphics[width=8cm]{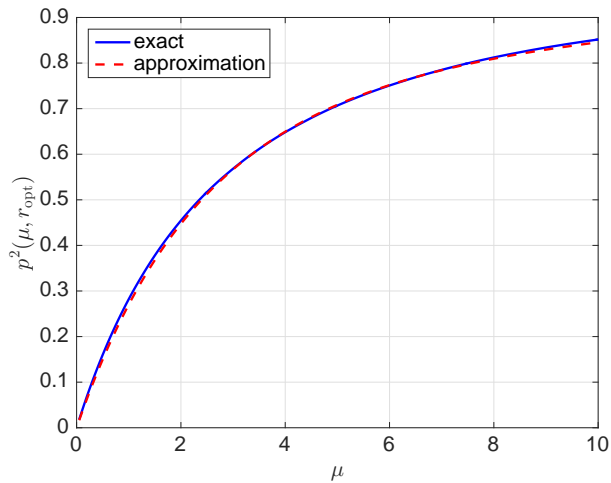}}
\caption{Maximum fraction of users that are uniquely covered in two dimensions.
The dashed line is the approximation in \eqref{p2_approx}.}
\label{fig:p2_opt}
\end{figure}

Next, we turn to bounds and approximations.
It is straightforward to obtain a simple lower bound for $p^2(\mu,r)$.
\begin{theorem}
\label{t:p2_bound}
$p^2(\mu,r)\ge e^{-\pi r^2}(1-e^{-\mu\pi r^2})$.
\end{theorem}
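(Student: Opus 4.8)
The plan is to use the general formula \eqref{basic} from Theorem~1, namely
\[
p^2(\mu,r)=\int_0^1(1-e^{-\mu\pi r^2 t})f_r(t)\,dt,
\]
together with a judicious lower bound on the integrand. Since the integrand $(1-e^{-\mu\pi r^2 t})$ is nonnegative throughout $[0,1]$, I can obtain a lower bound on $p^2(\mu,r)$ by discarding most of the range of integration and keeping only the contribution from the point mass of $f_r$ at $t=1$.

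Concretely, the event $\{t=1\}$ corresponds to the disc $D(O,r)$ being \emph{entirely} uncovered by the other users' discs, i.e.\ no other user lies within distance $2r$ of the origin. By the same reasoning used in the proof of Theorem~1, this point mass has weight $e^{-\pi(2r)^2}$? No---I should be careful here: the relevant probability is that no other user falls in $D(O,2r)$, whose area is $\pi(2r)^2=4\pi r^2$. However, the formula to be proved has $e^{-\pi r^2}$, not $e^{-4\pi r^2}$, so the intended lower bound is evidently \emph{not} the full point mass at $t=1$. Instead I expect the cleaner route is to keep only the part of the mass where the disc is at least ``fully available'' in a weaker sense, or to bound using convexity. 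Let me reconsider: the factor $e^{-\pi r^2}$ is exactly the probability that no \emph{base station}-blocking event\dots the most natural source of $e^{-\pi r^2}$ is the probability that $D(O,r)$ itself contains no other user. This suggests conditioning on the event $A=\{$no other user in $D(O,r)\}$, which has probability $e^{-\pi r^2}$ and on which $t\ge$ some positive quantity.

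So the key step is: on the event $A$ that no other user lies in $D(O,r)$, the uncovered fraction $X$ of $D(O,r)$ satisfies $X=t$ for some $t$, and the conditional probability that $u$ is covered is $\E[1-e^{-\mu\pi r^2 X}\mid A]\ge 1-e^{-\mu\pi r^2\,\E[X\mid A]}$ by Jensen's inequality applied to the concave function $t\mapsto 1-e^{-\mu\pi r^2 t}$. I would then show $\E[X\mid A]\ge$ something, or more simply restrict further to the sub-event where the entire disc is uncovered. The cleanest argument discards everything except event $A$ and uses that, conditioned on $A$, at least a fraction of $D(O,r)$ is guaranteed uncovered. The main obstacle is pinning down exactly which event yields the factor $e^{-\pi r^2}$ while the second factor $(1-e^{-\mu\pi r^2})$ emerges cleanly; I anticipate the intended proof conditions on $A$, notes $\Prb(A)=e^{-\pi r^2}$, and on $A$ bounds the covered probability below by $1-e^{-\mu\pi r^2}$ (the value attained when all of $D(O,r)$ is available), giving $p^2(\mu,r)\ge \Prb(A)\cdot(1-e^{-\mu\pi r^2})=e^{-\pi r^2}(1-e^{-\mu\pi r^2})$ directly.
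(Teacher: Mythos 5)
There is a genuine gap in your final argument. You condition on the event $A$ that no other user lies in $D(O,r)$, correctly note that $\Prb(A)=e^{-\pi r^2}$, and then assert that on $A$ the conditional probability of unique coverage is at least $1-e^{-\mu\pi r^2}$, ``the value attained when all of $D(O,r)$ is available''. But on $A$ the disc $D(O,r)$ is \emph{not} all available: users at distance between $r$ and $2r$ from the origin are perfectly compatible with $A$, and their discs cover parts of $D(O,r)$. In the notation of Theorem~1, conditioned on $A$ the uncovered fraction $X$ satisfies $X<1$ with positive probability, so
\[
\E\bigl[1-e^{-\mu\pi r^2X}\mid A\bigr]<1-e^{-\mu\pi r^2},
\]
and the claimed lower bound on the conditional probability is false. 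The failure is not marginal: for large $r$, even on $A$ the typical uncovered fraction is of order $r^{-4}$, since users just outside $D(O,r)$ cover almost all of it, so the conditional coverage probability tends to $0$ while $1-e^{-\mu\pi r^2}$ tends to $1$. Your Jensen variant has the same defect, since it produces $1-e^{-\mu\pi r^2\E[X\mid A]}$ with $\E[X\mid A]<1$, which is weaker than what you need. The only event that forces $X=1$ is ``no other user within distance $2r$'', which, as you yourself observed, has probability $e^{-4\pi r^2}$ and yields only a weaker bound.

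The idea you are missing is to center the ``no other user'' disc at the \emph{base station}, not at the user. The paper's proof is: with probability $1-e^{-\mu\pi r^2}$ there is a base station $b$ within distance $r$ of the user $u$; given such a $b$ (say the nearest one to $u$), the probability that no other user lies in $D(b,r)$ is exactly $e^{-\pi r^2}$, because the other users form a unit-intensity Poisson process \emph{independent} of the base station process, and $D(b,r)$ has area $\pi r^2$ wherever $b$ happens to fall. The two events are therefore independent, and their intersection implies that $u$ is uniquely covered by $b$; multiplying the probabilities gives the bound. Note that this argument bypasses Theorem~1 entirely, which is a hint that working from \eqref{basic} is not the natural route here: the point masses and shape of $f_r$ encode exactly the geometric complications (partial coverage by nearby users' discs) that the paper's independence argument avoids.
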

\begin{proof}
A given user is covered if there is a base station within distance $r$ (this event has probability $1-e^{-\mu\pi r^2}$), and
if there is no other user within distance $r$ of that base station (this event has probability $e^{-\pi r^2}$). These last
two events are independent.
\end{proof}
\noindent This bound should become tight as $\mu\to 0$ (with $r$ fixed), or as $r\to 0$ (with $\mu$ fixed), since, in those limiting
scenarios, if there is a base station within distance $r$ of a user, it is likely to be the only such base station.

Finally, here is an approximation for $p^2(\mu,r)$ when $r$ is large. We use standard asymptotic notation, so that $f(x)\sim g(x)$
as $x\to\infty$ means $f(x)/g(x)\to1$ as $x\to\infty$. In our case, we will have $r\to\infty$ with $\mu$ fixed.

\begin{theorem}
\label{t:p2_asymptote}
As $r\to\infty$ with $\mu$ fixed, $p^2(\mu,r)\sim \mu\pi r^2e^{-\pi r^2}$.
\end{theorem}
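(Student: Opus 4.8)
The plan is to recast the quantity in terms of the \emph{uncovered area} rather than the uncovered fraction, and then exploit the elementary inequalities $x-\tfrac12 x^2\le 1-e^{-x}\le x$. Write $A=\pi r^2 X$ for the area of $D(O,r)$ left uncovered by the union of the other discs, so that \eqref{basic} reads
\[
p^2(\mu,r)=\int_0^1(1-e^{-\mu\pi r^2 t})f_r(t)\,dt=\E\bigl(1-e^{-\mu A}\bigr).
\]
The first step is to compute the mean uncovered area. Since a point $x$ is uncovered precisely when the unit-intensity Poisson process of disc centres avoids the disc $D(x,r)$ of area $\pi r^2$, each point of $D(O,r)$ is uncovered with probability $e^{-\pi r^2}$, and Fubini gives $\E(A)=\pi r^2 e^{-\pi r^2}$.

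For the upper bound I would use $1-e^{-\mu A}\le \mu A$ directly, which yields $p^2(\mu,r)\le \mu\,\E(A)=\mu\pi r^2 e^{-\pi r^2}$ with no further work. The matching lower bound is the real content: from $1-e^{-\mu A}\ge \mu A-\tfrac12\mu^2 A^2$ I get
\[
p^2(\mu,r)\ge \mu\,\E(A)-\tfrac12\mu^2\,\E(A^2),
\]
so it suffices to show that $\E(A^2)=o(\E(A))=o(r^2 e^{-\pi r^2})$ as $r\to\infty$; the theorem then follows by sandwiching.

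The crux is therefore the second-moment estimate, computed via the two-point coverage probability. Two points $x,y\in D(O,r)$ are both uncovered exactly when the process avoids $D(x,r)\cup D(y,r)$, an event of probability $e^{-(2\pi r^2-I(\|x-y\|))}$, where $I(d)$ is the area of intersection of two radius-$r$ discs whose centres are distance $d$ apart. Hence
\[
\E(A^2)=\int_{D(O,r)}\!\int_{D(O,r)} e^{-2\pi r^2+I(\|x-y\|)}\,dx\,dy=e^{-2\pi r^2}\!\int_{D(O,r)}\!\int_{D(O,r)} e^{I(\|x-y\|)}\,dx\,dy .
\]
I expect this last integral to be the main obstacle. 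The idea is that the integrand attains its maximum $e^{\pi r^2}$ only on the diagonal $x=y$ and decays sharply away from it: a short computation gives $I(0)=\pi r^2$ and $I'(d)=-\sqrt{4r^2-d^2}$, so near the diagonal $I(d)\approx \pi r^2-2rd$, i.e.\ $e^{I(d)}$ decays like $e^{\pi r^2}e^{-2rd}$. Freezing $x$ and integrating $y$ over the shell $\|y-x\|=d$, the exponential decay confines the mass to within $O(1/r)$ of the diagonal and contributes a factor $\int_0^\infty e^{-2rd}\,2\pi d\,dd=\Theta(r^{-2})$, so that the double integral should be $\Theta\bigl(r^2\cdot e^{\pi r^2}\cdot r^{-2}\bigr)=\Theta(e^{\pi r^2})$.

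To make this rigorous I would bound rather than expand. Using $D(O,r)\subseteq D(x,2r)$ for $x\in D(O,r)$, I would write $\int_{D(O,r)}e^{I(\|x-y\|)}\,dy\le 2\pi\int_0^{2r}e^{I(d)}\,d\,dd$, split the $d$-integral at $d=r$, bound $I(d)\le \pi r^2-\sqrt3\,rd$ on $[0,r]$ (since $|I'(d)|=\sqrt{4r^2-d^2}\ge\sqrt3\,r$ there) to control the near-diagonal part by $e^{\pi r^2}/(3r^2)$, and bound the tail $[r,2r]$ crudely by $e^{I(r)}=e^{cr^2}$ with $c=\tfrac{2\pi}{3}-\tfrac{\sqrt3}{2}<\pi$, which is exponentially negligible. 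This gives $\int\!\int e^{I}=O(e^{\pi r^2})$, hence $\E(A^2)=O(e^{-\pi r^2})=o(r^2 e^{-\pi r^2})=o(\E(A))$, completing the argument. The one point demanding care is the disc-intersection geometry --- deriving $I'(d)=-\sqrt{4r^2-d^2}$ and the resulting decay rate --- since everything hinges on the integrand being concentrated within $O(1/r)$ of the diagonal.
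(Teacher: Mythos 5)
Your proof is correct, and it takes a genuinely different route from the paper's --- in fact a more rigorous one, since the paper explicitly labels its argument a sketch. The paper analyses the geometry of the vacant set: it draws all circle boundaries, counts boundary intersection points ($4\pi r^2$ per unit area), invokes Euler's formula to show atomic regions are asymptotically equinumerous with intersection points, and concludes that, conditional on $D(u,r)$ containing any uncovered area (probability $\approx\alpha=(\pi r^2)^2e^{-\pi r^2}$), this area is a single atomic region of expected area $(\pi r^2)^{-1}$; linearizing $1-e^{-\mu\pi r^2 t}$ on that scale gives the asymptote. Several steps there (the ``approximate Poisson process'' of uncovered regions, the asymptotic manipulations of conditional densities) are heuristic. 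You replace all of this structural analysis with a first/second-moment sandwich: the exact identity $\E(A)=\pi r^2 e^{-\pi r^2}$ (Fubini over one-point vacancy probabilities) gives the upper bound $p^2(\mu,r)\le\mu\E(A)$ for free, and the lower bound $1-e^{-\mu A}\ge\mu A-\tfrac12\mu^2A^2$ reduces the theorem to showing $\E(A^2)=o(\E(A))$, which you get from the two-point vacancy probability $e^{-2\pi r^2+I(\|x-y\|)}$ together with the lens-area calculus. I checked the latter: $I'(d)=-\sqrt{4r^2-d^2}$ is correct, so $I(d)\le\pi r^2-\sqrt3\,rd$ on $[0,r]$, the near-diagonal contribution is at most $e^{\pi r^2}/(3r^2)$ per unit of the outer integral, the tail constant $c=\tfrac{2\pi}{3}-\tfrac{\sqrt3}{2}<\pi$ is right, and the conclusion $\E(A^2)=O(e^{-\pi r^2})=o(r^2e^{-\pi r^2})$ follows; your argument even quantifies the error, giving $p^2(\mu,r)=\mu\pi r^2e^{-\pi r^2}\bigl(1+O_\mu(r^{-2})\bigr)$. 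What the paper's approach buys instead is geometric insight: it identifies \emph{why} the first moment dominates --- the vacancy, when present, is typically a single atomic cell of area $\asymp(\pi r^2)^{-1}$ --- which is exactly the structural fact that your second-moment bound encodes analytically, since $\E(A^2)\approx\Prb(A>0)\cdot\E(A\mid A>0)^2\approx(\pi r^2)^2e^{-\pi r^2}\cdot(\pi r^2)^{-2}=e^{-\pi r^2}$. The two arguments are thus complementary: yours is a complete proof by elementary inequalities; the paper's sketch explains the underlying geometry.
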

\begin{proof} (Sketch)
We recall Theorem 1, which states that
\[
p^2(\mu,r)=\int_0^1(1-e^{-\mu\pi r^2t})f_r(t)\,dt,
\]
and attempt to approximate $f_r(t)$ as $r\to\infty$.

To this end, it is convenient to describe the geometry of the union of discs $\bigcup_{u\in\cP'}D(u,r)$ in some detail. Such
coverage processes have been studied extensively in the mathematical literature~\cite{net:Gilbert65,net:Hall88,net:Janson86,net:Meester96}; our approach follows
that in~\cite{net:Balister09springer,net:Balister10aap}. The main idea is to consider the boundaries $\partial D(u,r)$ of the discs $D(u,r)$, rather than the
discs themselves. Consider a fixed disc boundary $\partial D(u,r)$. This boundary intersects the boundaries $\partial D(u',r)$
of all discs $D(u',r)$ whose centers $u'$ lie at distance less than $2r$ from $u$. There are an expected number $4\pi r^2$ of
such points $u'\in\cP'$, each contributing two intersection points $\partial D(u,r)\cap\partial D(u',r)$, and each intersection
is counted twice (once from $u$ and once from $u'$). Therefore we expect $4\pi r^2$ intersections of disc boundaries per unit area
over the entire plane; note that these intersections {\it do not} form a Poisson process, since they are constrained to lie on
various circles.

The next step is to move from intersections to regions. The disc boundaries partition the plane into small ``atomic" regions.
Drawing all the disc boundaries in the plane yields an infinite plane graph, each of whose vertices (disc boundary intersections)
has four curvilinear edges emanating from it. Each such edge is counted twice, once from each of its endvertices, so there are
almost exactly twice as many edges as vertices in any large region $R$. It follows from Euler's formula $V-E+F=2$ for plane
graphs~\cite{net:Bollobas98} that the number of atomic regions in $R$ is asymptotically the same as the number of intersection points in
$R$. Moreover, each vertex borders four atomic regions, so that the average number of vertices bordering an atomic region is
also four. Note that this last figure is just an average, and that many atomic regions will have less than, or more than, four
vertices on their boundaries.

The third step is to return to the discs themselves and calculate the expected number of {\it uncovered} atomic regions per unit area.
It is most convenient to calculate this in terms of uncovered intersection points. A fixed intersection point is uncovered by
$\bigcup_{u\in\cP'}D(u,r)$ with probability $e^{-\pi r^2}$ (using the independence of the Poisson process), so we expect
$4\pi r^2 e^{-\pi r^2}$ uncovered intersections, and so $\pi r^2 e^{-\pi r^2}$ uncovered regions, per unit area in $R$. Therefore
the expected number of uncovered regions in $D(u,r)$, which has area $\pi r^2$, is $\alpha=(\pi r^2)^2 e^{-\pi r^2}\to 0$.

How large are these uncovered atomic regions? To answer this, recall that the expected uncovered area in $D(u,r)$ is $\pi r^2 e^{-\pi r^2}$.
The uncovered atomic regions form an approximate Poisson process, so that the probability of seeing two uncovered regions in $D(u,r)$
is negligible. Now let $X_r$, with density function $f_r(t)$, be the uncovered area fraction in $D(u,r)$. We have $E(X_r)=e^{-\pi r^2}$,
but $\Prb(X_r=0)\sim e^{-\alpha}\sim 1-\alpha$. Writing now $Y_r$ for the expected uncovered area fraction in $D(u,r)$ conditioned on $X_r>0$,
and $h_r(t)$ for the density of $Y_r$, we see that $\E(Y_r)\sim\alpha^{-1}\E(X_r)=(\pi r^2)^{-2}$. In other words, if there is uncovered
area in $D(u,r)$, it occurs in one atomic region of expected area $(\pi r^2)^{-1}$. Consequently, we have
\begin{align*}
p^2(\mu,r)&=\int_0^1(1-e^{-\mu\pi r^2t})f_r(t)\,dt
\sim\alpha\int_0^1(1-e^{-\mu\pi r^2t})h_r(t)\,dt\\
&\sim\alpha\mu\pi r^2\int_0^1th_r(t)\,dt
=\alpha\mu\pi r^2\E(Y_r)
\sim\alpha\mu(\pi r^2)^{-1}
=\mu\pi r^2e^{-\pi r^2}.
\end{align*}
\end{proof}

Note that this is the same result that we would have obtained from the incorrect argument that $X_r$ is concentrated around its mean,
whereas in fact its density $f_r(t)$ has a large point mass at $t=0$. Indeed, the thrust of the above argument is that, for the relevant
range of $t$ (namely, for $t=O((\pi r^2)^{-2})$), $1-e^{\mu\pi r^2t}-\mu\pi r^2 t=O(r^4t^2)=O(r^{-4})$, which is asymptotically negligible
compared to the remaining terms.

Fig.~\ref{fig:p2_bounds} shows $p^2(\mu,r)$, together with the lower bound from \Th{p2_bound} and the asymptote
from \Th{p2_asymptote}. As predicted, \Th{p2_bound} is close to the truth when $r$ is small, while \Th{p2_asymptote} is more accurate
for large values of $r$.

\begin{figure}[htp!]
\centering
\begin{minipage}{0.49\textwidth}
\includegraphics[width=\textwidth]{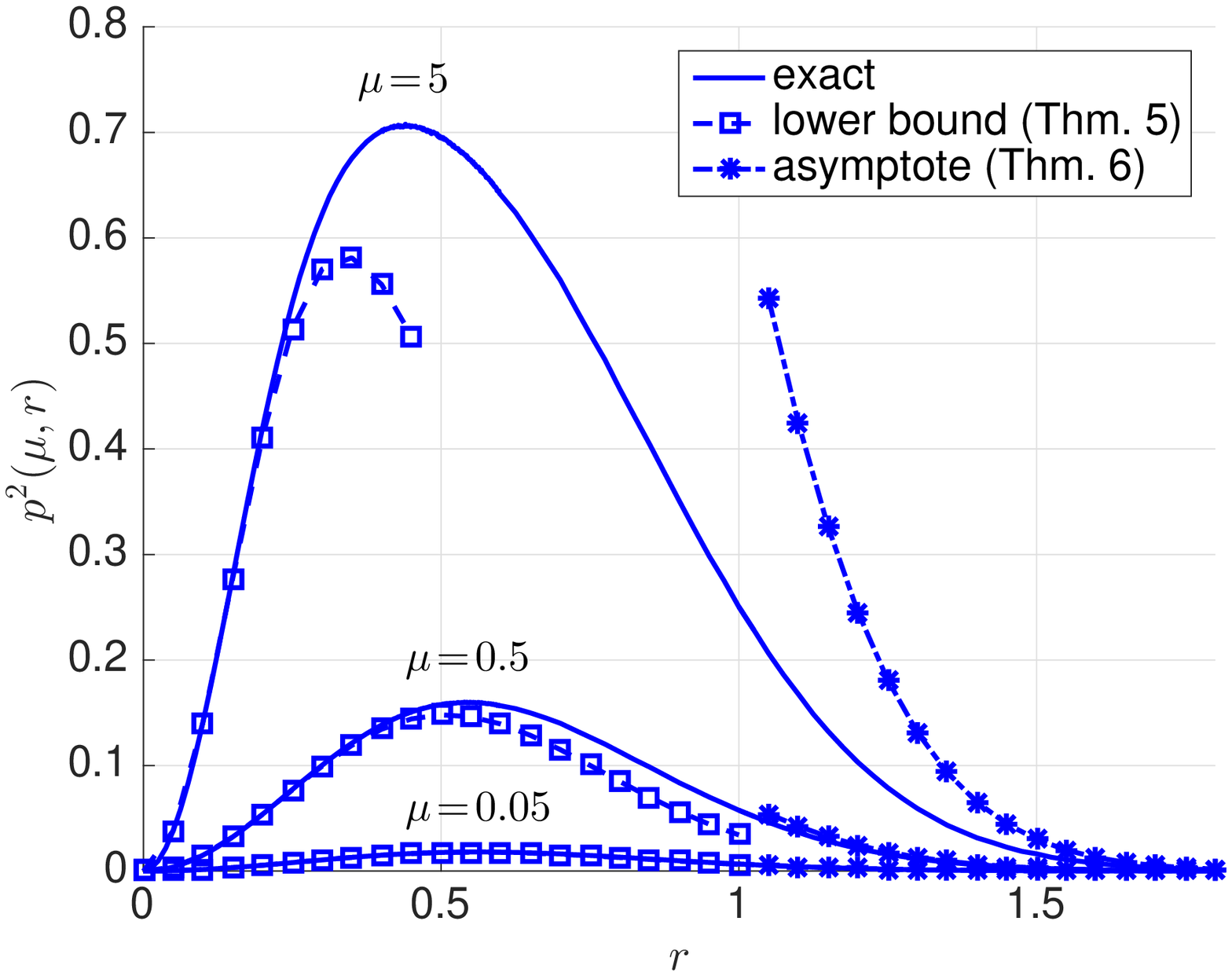}
\end{minipage}
\hfill
\begin{minipage}{0.49\textwidth}
\includegraphics[width=\textwidth]{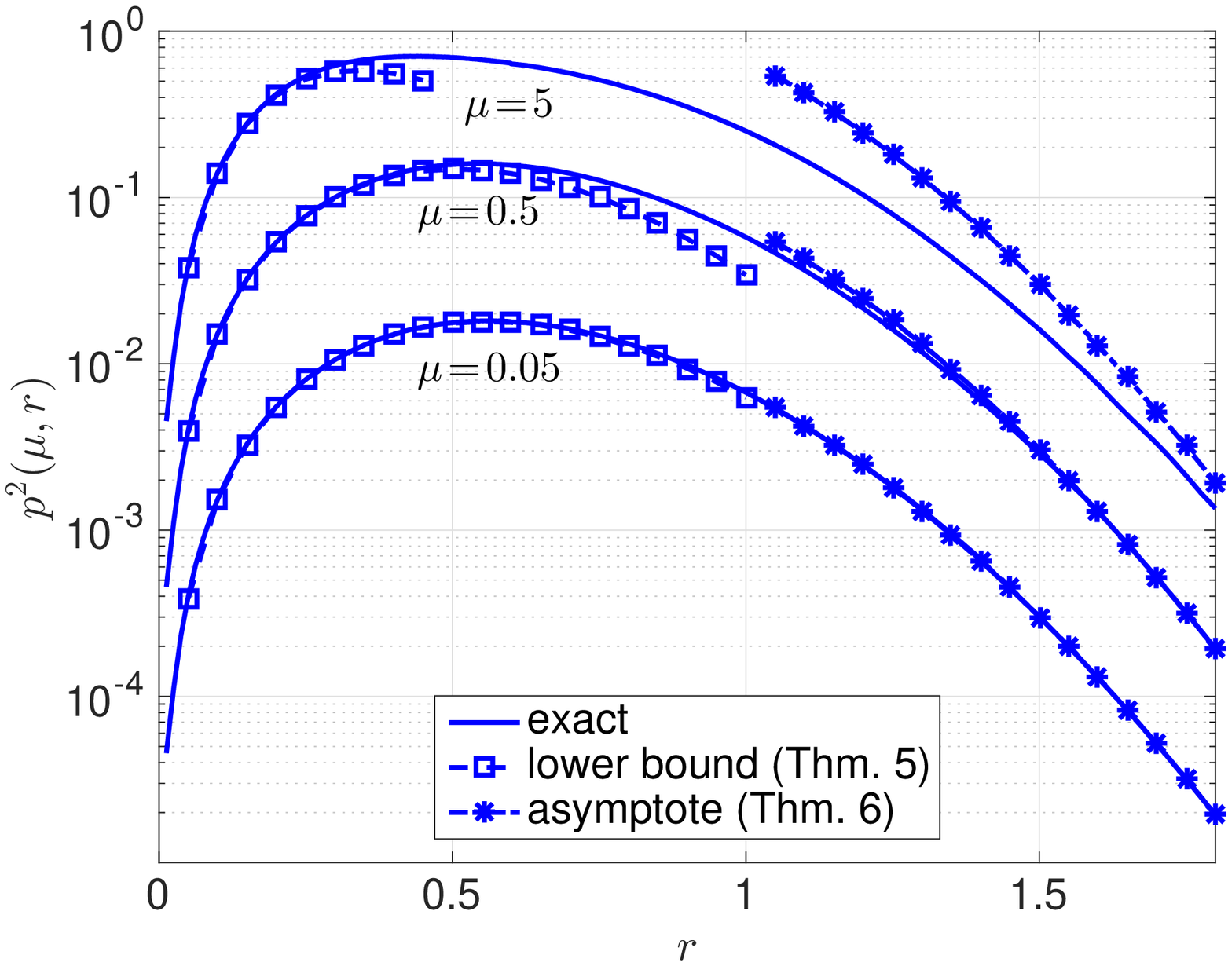}
\end{minipage}
\caption{$p^2(\mu,r)$ for $\mu=0.05,0.5,5$, with the lower bound from \Th{p2_bound} and the approximation from
\Th{p2_asymptote}. (Left) linear scale. (Right) logarithmic scale.}
\label{fig:p2_bounds}
\end{figure}

Both these last two results generalize to the $d$-dimensional setting in the obvious way; for simplicity we omit the details.

\section{Conclusions}

In this paper, we have investigated a natural stochastic coverage model, inspired by wireless cellular networks. For this model,
we have studied the maximum possible proportion of users who can be uniquely assigned base stations, as a function of the base station
density $\mu$ and the communication range $r$. We have solved this problem completely in one dimension
and provided bounds, approximations 
and simulation results for the two-dimensional case. We hope that our work will stimulate further research in this area.

\section{Acknowledgements}

We thank Giuseppe Caire for bringing this problem to our attention.
This work was supported by the US National Science Foundation
[grant CCF 1525904].


\bibliographystyle{ieeetr}

\end{document}